\DeclareSymbolFont{usualmathcal}{OMS}{cmsy}{m}{n}
\DeclareSymbolFontAlphabet{\mathcal}{usualmathcal}
\newcolumntype{d}[1]{D..{#1}}
\definecolor{refkey}{rgb}{0.9451,0.2706,0.4941}
\definecolor{labelkey}{rgb}{0.9451,0.2706,0.4941}
\def\z2{$\mathbb{Z}_2$}
\definecolor{darkgray}{rgb}{0.33, 0.33, 0.33}
\newcommand{\qbinom}[2]{\genfrac{[}{]}{0pt}{}{#1}{#2}_q}
\pgfplotsset{compat=1.17}
\numberwithin{equation}{section}
\newtheorem{theorem}{Theorem}
\newtheorem{corollary}{Corollary}
\begin{document}

\pagestyle{SPstyle}

\begin{center}{\Large \textbf{\color{scipostdeepblue}{
On perturbation around closed exclusion processes\\
}}}\end{center}

\begin{center}\textbf{
Masataka Watanabe
}\end{center}

\begin{center}
Graduate School of Informatics, Nagoya University, Nagoya 464-8601, Japan
\\[\baselineskip]
\href{mailto:max.washton@gmail.com}{\small max.washton@gmail.com}
\end{center}

\section*{\color{scipostdeepblue}{Abstract}}
\boldmath\textbf{%
We derive the formula for the stationary states of particle-number conserving exclusion processes infinitesimally perturbed by inhomogeneous adsorption and desorption.
The formula not only proves but also generalises the conjecture proposed in \href{https://doi.org/10.1103/PhysRevE.97.032135}{[Phys. Rev. E \textbf{97}, 032135]} to account for inhomogeneous adsorption and desorption.
As an application of the formula, we draw part of the phase diagrams of the open asymmetric simple exclusion process with and without Langmuir kinetics, correctly reproducing known results.
}

\vspace{\baselineskip}

\vspace{10pt}
\noindent\rule{\textwidth}{1pt}
\tableofcontents
\noindent\rule{\textwidth}{1pt}
\vspace{10pt}

\section{Introduction}

Among the famous solvable models of driven diffusive systems is the asymmetric simple exclusion process (ASEP).
Aside from being solvable deep in the non-equilibrium regime, the model is interesting for its connections to various ideas in statistical physics such as boundary-induced phase transitions \cite{Henkel:1993zt}, the KPZ universality class \cite{Saenz2019TheKU}, and random matrix theory \cite{2007JSP...128..799I}.
It has also attracted attention for its wide applicability to various phenomena in physics, biology or society \cite{SCHUTZ20011,schadschneider2010stochastic,macdonald1968kinetics}.


ASEP describes particles on a one-dimensional lattice that hop asymmetrically (\it e.g., \rm more frequently to the right than left).
It is a continuous-time Markov process and as such it is interesting to study where the probability distribution settles after a long time, \it i.e., \rm its stationary distribution (state).
This is equivalent to studying the eigenvector(s) of the Markov matrix with zero eigenvalue, which is possible for ASEP because of its ${U}_q({sl}(2))$ symmetry \cite{cond-mat/9910270,cond-mat/0312457,Orlando:2009kd}.
However, remained elusive is the stationary distribution (as well as other properties) for exclusion processes without integrability, even though their practical usefulness and applicability are not impaired by the lack thereof.

One example of exclusion processes without integrability is the ASEP combined with Langmuir kinetics (ASEP-LK), where, in addition to asymmetric hopping, a particle can attach to or detach from the lattice at homogeneous rates.
Even though the system has the ${U}_q({sl}(2))$ symmetry on a periodic chain, it is indeed broken on a finite-length chain with boundary conditions.
The model is proposed to describe unidirectional motion of motor proteins \cite{2003PhRvL..90h6601P}, so it is a good example of non-integrable exclusion processes with interesting applications.

One possible way to analytically study the properties of non-integrable exclusion processes is to take the thermodynamic limit.
One can for example determine the phase diagram of the system by solving the fluid equations obtained by taking the coarse-grained, continuum limit.
However, the strategy usually involves using the mean-field approximation, which may or may not be justified from first principles \cite{2003PhRvL..90h6601P,PhysRevE.68.026117}.
{
One can also derive the hydrodynamic equation by separating slow diffusion modes from fast transport modes as in \cite{cond-mat/0302208}.
Such analysis indeed gives the correct phase diagram at large volume, even though it does not account for fluctuations and so it does not constitute algorithmic computations of physical quantities. (However see \cite{1803.06829} for the application of fluctuating hydrodynamics to an exclusion process.)
}
It would therefore be desirable if we have another theoretical method with a clear regime of validity to compare with experiments or computer simulations.
This could theoretically justify the mean-field approximations as well.

There is one such method which seemingly has been mostly overlooked -- the perturbation theory.
The ASEP-LK is a prime example:
The stationary states of the periodic/closed/open ASEP have been obtained exactly, and so one can in principle obtain those of the ASEP-LK perturbatively when the ad/desorption rates are small.
For example, \cite{PhysRevE.97.032135} conjectured a formula for the stationary states of the closed ASEP with infinitesimally small Langmuir kinetics.
This formula is yet to be proven despite having a simple and inviting form, however.

The goal of this paper is to set up such perturbation in generic situations.
We consider perturbing a particle-number conserving hopping process with inhomogeneous ad/desorption and derive a formula for the stationary state at leading order.
(The leading order result is meaningful because this is a degenerate perturbation theory.)
Our formula potentially has various interesting applications.
For one thing, the above-mentioned conjecture is its immediate consequence since closed ASEP is clearly a particle-number conserving process.
We can also apply the formula to draw perturbative part of the phase diagrams of the open ASEP with/without Langmuir kinetics.
We do so by interpreting the open boundary condition as a special case of the inhomogeneous ad/desorption acting only on boundary sites.
The results, as we will see later, reproduce the results obtained in \cite{PhysRevE.68.026117} but without relying on the mean-field approximation or any other unjustified approximations at all.
Therefore we are going to have a clear regime of validity for our theoretical formula, even though the price we pay is the restriction to the perturbative regime.

The rest of the paper is organised as follows.
We first briefly review Markov processes, in particular closed ASEP with/without infinitesimal Langmuir kinetics in Section \ref{sec:2}.
We then go on to construct the stationary states of generic particle-number conserving Markov processes infinitesimally perturbed by inhomogeneous ad/desorption in Section \ref{sec:3}.
This will, as a special case, prove the conjecture given in \cite{PhysRevE.97.032135}.
We will provide other applications of the formula by deriving the phase diagram of the open ASEP with and without Langmuir kinetics in Section \ref{sec:4}.
We conclude in Section \ref{sec:5} with discussions and future directions.

\section{Driven diffusive systems with ad/desorption}
\label{sec:2}

\subsection{Continuous-time Markov process}


Let us consider a continuous-time Markov process describing particles hopping on $L$ lattice sites (of arbitrary shapes and dimensions).
We are interested in the time evolution of the probability associated with a given configuration.
This is given by a collection of differential equations, conveniently written in matrix form using the Markov matrix $M$,
\begin{align}
    \frac{d}{dt}\ket{P}=M\ket{P},
\end{align}
where $\ket{P}$ is a vector collecting probabilities of realising given configurations. 
In other words, by writing the configuration of particles as $(\tau_1,\dots,\tau_L)$ where $\tau_i=1$ ($\tau_i=0$) means that a particle is (not) present at site $i$, and its realisation probability as $p(\tau_1,\dots,\tau_L)$, we package the distribution into a state
\begin{align}
    \ket{P}=\sum_{\tau_1,\dots,\tau_L}p(\tau_1,\dots,\tau_L)\ket{\tau_1,\dots,\tau_L},
\end{align}
and this vector evolves according to the differential equation above.

For later convenience we denote the total Hilbert space as $V$, which is a tensor product of the Hilbert space $V_i$ on site $i$ from $i=1$ to $L$,
\begin{align}
    V=\bigotimes_{i=1}^{L}V_i.
\end{align}
It can also be decomposed into a direct sum of fixed particle number subspaces $W_N$ (where $N$ indicates the number of particles in the system), so that 
\begin{align}
    V=\bigoplus_{N=0}^{L}W_N.
\end{align}

Given such an evolution equation, one interest lies in finding where the probability distribution settles after a long time.
This is given by the eigenvector of $M$ with eigenvalue zero.
The number of such eigenvectors are expected to match that of the superselection sectors of $M$.

\subsection{Closed exclusion process with ad/desorption}

Our interest in this paper lies in the system $M$ such that $M=M_0+\epsilon H$ where $\epsilon \ll 1$ is the perturbation parameter\footnote{We will hereafter identify the Markov matrix as the corresponding system itself.}.
We require that $M_0$ conserves the particle number (\it i.e., \rm $U(1)$ symmetric) while $\epsilon H$ breaks it \it via \rm ad/desorption.
Concretely, we have
\begin{align}
    \begin{split}
        &M\equiv M_0+\epsilon H, \quad
        M_0\big|_{W_N}:W_N\to W_N \\
        &H=\sum_{i=1}^{L}h_i, \quad h_i \equiv 
        \begin{pmatrix}
            -\alpha_i & \beta_i  \\
            \alpha_i & -\beta_i  \\
        \end{pmatrix}_{V_i}.
    \end{split}
    \label{eq:def}
\end{align}
where $h_i$ only acts on $i$-sites, with
$\epsilon \alpha_i$ and $\epsilon\beta_i$ representing adsorption and desorption rates at site $i$, respectively.


Note that, before perturbation, the state space breaks up into $L+1$ superselection sectors $W_N$, and we have a stationary state for each of them.
We denote the $N$-particle stationary state as $\ket{S_N}$ hereafter.
In addition, since the perturbation breaks the particle-number symmetry, there are no more superselection sectors for $M$.
It therefore has only one stationary state, which we denote as $\ket{\tilde{S}}$.

The goal of this paper is to construct $\ket{\tilde{S}}$ in terms of $\ket{S_N}$ at leading order in $\epsilon$.
It is by now clear that this is the zeroth order degenerate perturbation theory.
We need to find the right basis on which the higher-order perturbation theory is run.
We will study this in Section \ref{sec:3}.

\paragraph{Example: ASEP with Langmuir kinetics}

Before moving on, we present an example of such systems, known as the closed ASEP perturbed by Langmuir kinetics (ASEP-LK).
The Markov matrix of closed ASEP-LK is given by the following,
\begin{align}
    \begin{split}
        &M=M_0+\epsilon H \\
        &M_0=\sum_{i=1}^{L-1} M_{i,i+1}, \quad
        M_{i,i+1}=
        \begin{pmatrix}
        0 & 0 & 0 & 0 \\
        0 & -q & 1 & 0 \\
        0 & q & -1 & 0 \\
        0 & 0 & 0 & 0
        \end{pmatrix}_{V_i\otimes V_{i+1}}\\
        &H=\sum_{i=1}^{L}h_i,\quad h_i \equiv \begin{pmatrix}
        -\alpha & 1  \\
        \alpha & -1  \\
    \end{pmatrix}_{V_i}.
    \end{split}
    \label{eq:model}
\end{align}
where $M_0$ describes the closed ASEP and $\epsilon H$ the Langmuir kinetics.
$M_{i,i+1}$ acts as an identity outside $V_i\otimes V_{i+1}$ and the bases of $M_{i,i+1}$ inside $V_i\otimes V_{i+1}$ are given by, from top to bottom columns or left to right rows, $\ket{0,0}$, $\ket{0,1}$, $\ket{1,0}$, and $\ket{1,1}$.
The perturbation $\epsilon H$ describes a particle homogeneously detaching from the lattice at rate $\omega_d\equiv \epsilon$ while attaching at rate $\omega_a\equiv \epsilon \alpha$.

The closed ASEP, $M_0$, trivially conserves the particle number and so has superselection sectors labelled by it.
There are therefore $L+1$ stationary states in $M_0$, which can be computed by using the Bethe ansatz as \cite{schutz1995reaction}
\begin{align}
    \ket{S_N}=\qbinom{L}{N}^{-1}\sum_{\left(n\right)_N}q^{\sum_{j=1}^{N}(L-j+1-n_{j})}\ket{\left(n\right)_N},
    \label{eq:stationary}
\end{align}
where $\ket{S_N}$ denotes the $N$-particle stationary state.
Here $\qbinom{L}{N}$ is the $q$-binomial, defined by
\begin{align}
    \qbinom{L}{N}\equiv \frac{(q;q)_L}{(q;q)_N(q;q)_{L-N}},\quad (a;q)_n\equiv \prod_{i=1}^{n}(1-aq^{i-1}),
\end{align}
and $\left(n\right)_N$ is an ordered collection of $N$ sites on which the particles are present.
We also used a shorthand notation $\ket{\left(n\right)_N}$ to refer to the basis corresponding to such a configuration. 
The overall normalisation is because the sum of probabilities must be one.

Once we perturb the system by $\epsilon H$, the particle-number conservation is lost and there is only one stationary state, $\ket{\tilde{S}}$.
Because the integrability is (mostly likely) lost due to the perturbation, it is considered difficult to derive the stationary state for this model.
It was however conjectured in \cite{PhysRevE.97.032135} that in the $\epsilon\equiv \omega_d\to 0$ limit (while fixing $\alpha$) $\ket{\tilde{S}}$ is given by
\begin{align}
    \ket{\tilde{S}}=\frac{1}{(1+\alpha)^L}\sum_{N=0}^{L}\binom{L}{N}\alpha^N\ket{S_N}+O(\epsilon).
    \label{eq:con}
\end{align}
We will prove this conjecture in the next section as a corollary to the main result.

\section{Construction of the stationary state}
\label{sec:3}

We are going to prove the following theorem.
\begin{theorem}
\label{thm:1}
    For a class of continuous-time Markov processes $M=M_0+\epsilon H$ defined in \eqref{eq:def}, the stationary state of $M$ can be written in terms of the $N$-particle stationary states of $M_0$, $\ket{S_N}$, as
    \begin{align}
        \ket{\tilde{S}}\equiv \frac{1}{\sum_{N=0}^{L}p_N}\sum_{N=0}^{L}p_N\ket{S_N}+O(\epsilon), \quad  p_N=\prod_{i=1}^{N}\left(\frac{A_L-A_{i-1}}{B_i}\right).
    \end{align}
    where $A_N$ and $B_N$ are given by
    \begin{align}
        A_N&\equiv \sum_{\left(n\right)_{N}}q[(n)_{N}]\sum_{n\in \left(n\right)_{N}}\alpha_n,\\
        B_N&\equiv \sum_{\left(n\right)_{N}}q[(n)_{N}]\sum_{n\in \left(n\right)_{N}}\beta_n,
    \end{align}
    using 
    \begin{align}
        \ket{S_N}\equiv \sum_{(n)_N}q[(n)_N]\ket{(n)_N}, \quad
        \sum_{(n)_N}q[(n)_N]=1.
    \end{align}
\end{theorem}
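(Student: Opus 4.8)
The plan is to run the zeroth-order degenerate perturbation theory explicitly; the only model-specific input beyond \eqref{eq:def} is the first-order solvability (Fredholm) condition, which will collapse onto a one-dimensional birth--death chain whose stationary distribution is exactly $(p_N)$. First I set up the dual bases. Since $M_0$ is block diagonal along $V=\bigoplus_N W_N$ and each block $M_0|_{W_N}$ is itself a Markov generator (its columns sum to zero), the flat covector $\langle\Sigma_N|\equiv\sum_{(n)_N}\langle(n)_N|$ is a left zero-eigenvector of $M_0$ for every $N$; as $\dim\ker M_0=L+1=\dim\ker M_0^{\mathrm T}$, these span the entire left null space. The normalisation $\sum_{(n)_N}q[(n)_N]=1$ yields the biorthogonality $\langle\Sigma_M|S_N\rangle=\delta_{MN}$, so $\{\ket{S_N}\}$ and $\{\langle\Sigma_N|\}$ are dual bases of $\ker M_0$ and $\ker M_0^{\mathrm T}$.

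Next, expand $\ket{\tilde S}=\ket{\tilde S^{(0)}}+\epsilon\ket{\tilde S^{(1)}}+\cdots$ in $M\ket{\tilde S}=0$. Order $\epsilon^0$ forces $\ket{\tilde S^{(0)}}\in\ker M_0$, i.e. $\ket{\tilde S^{(0)}}=\sum_N c_N\ket{S_N}$ with $\sum_N c_N=1$ from probability normalisation (using $\langle\Sigma|S_N\rangle=1$ for $\langle\Sigma|\equiv\sum_N\langle\Sigma_N|$). Order $\epsilon^1$ reads $M_0\ket{\tilde S^{(1)}}=-H\ket{\tilde S^{(0)}}$, whose solvability requires $H\ket{\tilde S^{(0)}}\in\operatorname{Im}M_0=(\ker M_0^{\mathrm T})^{\perp}$, that is
\begin{align}
    \sum_{N=0}^{L}c_N\,\langle\Sigma_M|H|S_N\rangle=0,\qquad M=0,\dots,L.
    \label{eq:sketch-fred}
\end{align}
Existence and uniqueness of $\ket{\tilde S}$ are guaranteed by the hypotheses on $M$, so the expansion is consistent and \eqref{eq:sketch-fred} is precisely what pins down $\ket{\tilde S^{(0)}}$.

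The computational heart is the matrix $(H_{\mathrm{eff}})_{MN}\equiv\langle\Sigma_M|H|S_N\rangle$. Split $H=H_{+}+H_{0}+H_{-}$ into its adsorption ($W_N\to W_{N+1}$), diagonal, and desorption ($W_N\to W_{N-1}$) parts. Acting with $H_{+}$ (resp. $H_{-}$) on $\ket{S_N}$ amounts to adding (resp. removing) a particle at each site $i$ with weight $\alpha_i$ (resp. $\beta_i$), and pairing with a flat covector sends every resulting configuration to $1$; this immediately gives $\langle\Sigma_{N-1}|H_{-}|S_N\rangle=B_N$ and $\langle\Sigma_{N+1}|H_{+}|S_N\rangle=A_L-A_N$, where $A_L=\sum_i\alpha_i$ because $W_L$ contains only the fully occupied configuration (of weight $1$). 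Since every column of $H$ sums to zero, the diagonal element is forced to be $\langle\Sigma_N|H_0|S_N\rangle=-(A_L-A_N)-B_N$. Hence $H_{\mathrm{eff}}$ is exactly the generator of a birth--death chain on $\{0,\dots,L\}$ with up-rate $A_L-A_N$ from $N$ to $N+1$ and down-rate $B_N$ from $N$ to $N-1$, and \eqref{eq:sketch-fred} says $(c_N)$ is its stationary distribution (physically: once the fast conserving dynamics relaxes within each sector to $\ket{S_N}$, the slow inter-sector motion is this chain).

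Finally, introduce the discrete current $J_M\equiv c_M B_M-c_{M-1}(A_L-A_{M-1})$; equation \eqref{eq:sketch-fred} is just $J_{M+1}=J_M$, while the boundary value $J_0=c_0B_0=0$ (since $B_0=0$) forces $J_M\equiv0$. Therefore $c_N B_N=c_{N-1}(A_L-A_{N-1})$, i.e. $c_N=c_0\prod_{i=1}^{N}\frac{A_L-A_{i-1}}{B_i}=c_0\,p_N$, and imposing $\sum_N c_N=1$ gives the stated formula. (Finiteness of $(p_N)$ and uniqueness of $\ket{\tilde S^{(0)}}$ require the chain to be irreducible, i.e. $A_L-A_N\neq0$ and $B_N\neq0$ in the relevant ranges, which holds generically; the degenerate cases collapse onto a single $\ket{S_N}$, in agreement with the formula.) The genuinely nontrivial step is the third one: showing that the matrix elements of $H$ between the flat covectors and the a priori intricate conserving stationary states $\ket{S_N}$ organise into the simple quantities $A_N,B_N$, and in particular that the adsorption element closes into $A_L-A_N$ rather than an unwieldy sum over unoccupied sites; the remainder is the Fredholm alternative plus the elementary theory of birth--death chains.
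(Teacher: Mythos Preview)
Your proposal is correct and follows essentially the same route as the paper: identify the left null vectors of $M_0$ as the flat covectors $\langle\Sigma_N|$ (the paper's $\bra{L_N}$), impose the first-order solvability condition $\langle\Sigma_M|H|\tilde S^{(0)}\rangle=0$, compute the three nonzero matrix elements $\langle\Sigma_M|H|S_N\rangle$ for $M=N,N\pm1$ as $A_L-A_{N-1}$, $-(A_L-A_N+B_N)$, $B_{N+1}$, and telescope the resulting three-term recursion using the boundary $B_0=0$ to obtain $p_N=\prod_{i=1}^{N}(A_L-A_{i-1})/B_i$. Your packaging of the recursion as a conserved discrete current $J_M$ and your explicit interpretation of $H_{\mathrm{eff}}$ as a birth--death generator on $\{0,\dots,L\}$ are exactly what the paper presents separately in its Appendix~A (the ``physical'' proof), so you have effectively combined the paper's main proof and its appendix into one argument.
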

Incidentally, we have
\begin{align}
    A_0=0, \quad A_{L}=\sum_{i=1}^{L}\alpha_i, \quad
    B_0=0, \quad B_L= \sum_{i=1}^{L}\beta_i.
\end{align}

Before attempting to prove this theorem, we have one remark.
\begin{corollary}
    This, as a corollary, immediately proves the conjecture \eqref{eq:con} given in \cite{PhysRevE.97.032135}.
\end{corollary}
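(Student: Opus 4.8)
The plan is simply to specialise Theorem~\ref{thm:1} to the closed ASEP with Langmuir kinetics of \eqref{eq:model}. All the ingredients are already at hand: the unperturbed matrix $M_0$ is the closed ASEP, which manifestly conserves the particle number and whose sectors $W_N$ are irreducible, so that each $N$-particle stationary state is unique and given by \eqref{eq:stationary}; writing it as $\ket{S_N}=\sum_{(n)_N}q[(n)_N]\ket{(n)_N}$ we read off $q[(n)_N]=\qbinom{L}{N}^{-1}q^{\sum_{j=1}^{N}(L-j+1-n_j)}$ with $\sum_{(n)_N}q[(n)_N]=1$. Moreover the Langmuir perturbation $\epsilon H$ in \eqref{eq:model} is exactly the homogeneous case $\alpha_i\equiv\alpha$, $\beta_i\equiv 1$ of the inhomogeneous ad/desorption \eqref{eq:def}. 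Hence the conjecture \eqref{eq:con} follows once we evaluate $A_N$, $B_N$ and $p_N$ from Theorem~\ref{thm:1} on these parameters.

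Because $\alpha_i$ and $\beta_i$ are site independent, the inner sums in $A_N$ and $B_N$ collapse: $\sum_{n\in(n)_N}\alpha_n=\alpha N$ and $\sum_{n\in(n)_N}\beta_n=N$ for every configuration, so using $\sum_{(n)_N}q[(n)_N]=1$,
\begin{align}
A_N=\alpha N,\qquad B_N=N,\qquad A_L=\alpha L .
\end{align}
Substituting into the product formula,
\begin{align}
p_N=\prod_{i=1}^{N}\frac{A_L-A_{i-1}}{B_i}=\prod_{i=1}^{N}\frac{\alpha\,(L-i+1)}{i}=\alpha^N\prod_{i=1}^{N}\frac{L-i+1}{i}=\binom{L}{N}\alpha^N ,
\end{align}
and the overall normalisation is the binomial theorem, $\sum_{N=0}^{L}p_N=\sum_{N=0}^{L}\binom{L}{N}\alpha^N=(1+\alpha)^L$. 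Putting these into $\ket{\tilde S}=\big(\sum_N p_N\big)^{-1}\sum_N p_N\ket{S_N}+O(\epsilon)$ reproduces \eqref{eq:con} verbatim.

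Granting Theorem~\ref{thm:1}, there is essentially no obstacle here: the corollary is pure bookkeeping. The only fact used beyond Theorem~\ref{thm:1} is the normalisation $\sum_{(n)_N}q^{\sum_{j=1}^{N}(L-j+1-n_j)}=\qbinom{L}{N}$ of \eqref{eq:stationary}, which is the standard generating-function identity for the $q$-binomial coefficient (the exponent running over the sizes of partitions fitting inside an $N\times(L-N)$ box), and even this is already built into the hypotheses on $\ket{S_N}$ in Theorem~\ref{thm:1}. The genuine work sits entirely in Theorem~\ref{thm:1} itself --- setting up the degenerate perturbation theory, imposing the order-$\epsilon$ solvability condition against the sector null covectors, and solving the resulting recursion for the $p_N$ --- not in its specialisation to the homogeneous model.
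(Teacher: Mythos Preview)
Your proof is correct and follows exactly the paper's approach: specialise Theorem~\ref{thm:1} to the homogeneous rates $\alpha_i=\alpha$, $\beta_i=1$, obtain $A_N=N\alpha$, $B_N=N$, hence $p_N=\binom{L}{N}\alpha^N$ and $\sum_N p_N=(1+\alpha)^L$. The paper's own proof is the same computation written more tersely.
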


\begin{proof}[Proof of Corollary 1]
    Because $A_N=N\alpha$ and $B_N=N$ in the current case, we immediately have $p_N=\binom{L}{N}\alpha^N$. We therefore have $\sum_{N=0}^{L}p_N=(1+\alpha)^L$. This concludes the proof of the conjecture \eqref{eq:con}.
\end{proof}

Now we move on to proving Theorem \ref{thm:1}, but prior to this let us set up some notations which will be useful later.
We denote $K_0$ as the subspace spanned by all the stationary states of $M_0$, while $K_1$ as the subspace spanned by all other eigenvectors.
Because $M_0$ is non-normal, $K_0$ and $K_1$ are not orthogonal to each other.


{
Let us also present the strategy of the proof.
We will be finding an eigenvector of a non-normal matrix in perturbation theory, starting from degenerate vacua.
As the eigenvector we are looking for is the stationary state of the perturbed Markox matrix, we have
\begin{align}
    (M_0+\epsilon H)\ket{\tilde{S}}=0, \quad \ket{\tilde{S}}\equiv \ket{S}+\epsilon \ket{v}+O(\epsilon^2),
\end{align}
where $\ket{S}\in K_0$ and $\ket{v}\in K_1$.
At order $O(\epsilon)$, the equation reduces to 
\begin{align}
    H\ket{S}=M_0\ket{v}\in K_1.
\end{align}
}



It might seem as if one needs to know all the eigenvectors of $M_0$ in order to impose such conditions, because the subspaces $K_{0,1}$ are not orthogonal to each other.
However, this is too pessimistic.
The space $K_1$ can be characterised by the fact that its inner product with the left eigenvector of $M_0$ with vanishing eigenvalue is zero.
In other words, if we write $\ket{L_N}$ as the $N$-particle eigenvector of $M_0^T$ (the transpose of $M_0$) with vanishing eigenvalue,
\begin{align}
    M_0^T\ket{L_N}=0,
\end{align}
we have that
\begin{align}
    \braket{L_N|\psi}=0 \quad \text{if} \quad \ket{\psi}\in K_1.
\end{align}
In addition, the form of $\ket{L_N}$ is immediate because $M_0$ is a Markov matrix,
\begin{align}
    \ket{L_N}=\sum_{\left(n_i\right)_{N}} \ket{\left(n_i\right)_N}.
\end{align}
This hinges on the fact that the sum of probabilities is constant in time and hence the sum of columns in a Markov matrix is zero (in each superselection sector, if any).\footnote{The form of $\ket{L_N}$ suggests that the overlap $\braket{L_N|\psi}$ is the sum of probabilities of realising $N$-particle states in $\ket{\psi}$. We thank Yuki Ishiguro and Jun Sato for discussions on this point. See also their paper \cite{2405.09261} whose submission was coordinated with ours.\label{footnote}}
The normalisation is immaterial so we chose an arbitrary one.

Summarising the discussions above, we now need to find  $\ket{S}\in K_0$ such that
\begin{align}
    \braket{L_N|H|S}=0
    \label{eq:fromhere}
\end{align}
for any $N$.
We parameterise $\ket{S}$ for convenience as 
\begin{align}
    \label{eq:from}
    \ket{S}\equiv \frac{1}{\sum_{N=0}^{L}p_N}\sum_{N=0}^{L}p_N\ket{S_N},
\end{align}
where we can set $p_0=1$.
We also parameterise $\ket{S_N}$ as
\begin{align}
    \ket{S_N}\equiv \sum_{(n)_N}q[(n)_N]\ket{(n)_N}.
\end{align}
We demand that they are properly normalised, in other words that the sum of probabilities becomes one, $\sum_{(n)_N}q[(n)_N]=1$.

Let us prove Theorem \ref{thm:1} now.

\begin{proof}[Proof of Theorem 1]
    First of all, $H\ket{S_i}$ does not overlap with $\ket{L_N}$ unless $i=N-1$, $N$, or $N+1$ because $H$ only takes $i$-particle states to $i$- or $(i\pm 1)$-particle states.
    Therefore the conditions $\braket{L_N|H|S}=0$ reduce to a set of recursion relations,
    \begin{align}
        p_{N-1}\braket{L_N|H|S_{N-1}}+p_{N}\braket{L_N|H|S_{N}}+p_{N+1}\braket{L_N|H|S_{N+1}}=0,
        \label{eq:rec}
    \end{align}
    where we set $p_{-1}=p_{L+1}=0$ for consistency.

    Let us now compute $\braket{L_N|H|S_{i}}$ for $i=N-1$, $i=N$, and $i=N+1$.
    Because we only need to compute the overlap with $\ket{L_N}$, we will only compute the projection of $H\ket{S_i}$ to $W_N$.
    Starting from $i=N-1$, we have 
    \begin{align}
        H\ket{S_{N-1}}\bigg|_{W_N}
        =
        \sum_{\left(n\right)_{N-1}}q[(n)_{N-1}]
        \sum_{n\notin \left(n\right)_{N-1}}\alpha_n\ket{\left(n\right)_{N-1}\cup {n}},
    \end{align}
    where $\ket{\left(n\right)_{N-1}\cup {n}}$ means adding a particle on site $n$ to the $(N-1)$-particle state $\ket{\left(n\right)_{N-1}}$.
    We then have
    \begin{align}
        \braket{L_N|H|S_{N-1}}
        &=
        \sum_{\left(n\right)_{N-1}}q[(n)_{N-1}]
        \sum_{n\notin \left(n\right)_{N-1}}\alpha_n\\
        &= \sum_{\left(n\right)_{N-1}}q[(n)_{N-1}]
        \left(
        \sum_{n=1}^{L}\alpha_n-\sum_{n\in \left(n\right)_{N-1}}\alpha_n
        \right)
        =A_L-A_{N-1}.
    \end{align}
    
    Let us continue to $i=N$.
    The $N$-particle subspace component in $H\ket{S_{N}}$ is given by
    \begin{align}
        H\ket{S_{N}}\bigg|_{W_N}
        =
        -\sum_{\left(n\right)_{N}}q[(n)_{N}]
        \sum_{n\notin \left(n\right)_{N}}\alpha_n\ket{\left(n\right)_{N}}-\sum_{\left(n\right)_{N}}q[(n)_{N}]\sum_{n\in \left(n\right)_{N}}\beta_n\ket{\left(n\right)_{N}}.
    \end{align}
    The overlap with $\ket{L_{N}}$ is hence given by 
    \begin{align}
        \braket{L_N|H|S_{N}}=-\left(A_L-A_N+B_N\right)
    \end{align}
    
    Finally we study the case where $i=N+1$.
    The $N$-particle subspace component in $H\ket{S_{N+1}}$ is given by
    \begin{align}
        H\ket{S_{N+1}}\bigg|_{W_N}
        =
        \sum_{\left(n\right)_{N+1}}q[(n)_{N+1}]
        \sum_{n\in \left(n\right)_{N+1}}\beta_n\ket{\left(n\right)_{N+1}\setminus {n}}
    \end{align}
    where $\ket{\left(n_i\right)\setminus {n}}$ means removing a particle on site $n$ from the $(N+1)$-particle state $\ket{\left(n\right)_{N+1}}$.
    The overlap with $\ket{L_{N}}$ is hence given by 
    \begin{align}
        \braket{L_N|H|S_{N}}=B_{N+1}
    \end{align}
    
    The recursion relation \eqref{eq:rec} therefore becomes
    \begin{align}
        (A_L-A_{N-1})p_{N-1}-B_Np_N=(A_L-A_{N})p_{N}-B_{N+1}p_{N+1}.
    \end{align}
    Since we have $\left.(A_L-A_{N-1})p_{N-1}-B_Np_N\right|_{N=0}=0$, we can derive a simplified recursion relation,
    \begin{align}
        p_N=\frac{A_L-A_{N-1}}{B_N}p_{N-1}.
    \end{align}
    
    By solving this recursion relation, we conclude that the stationary state of the system $M$ becomes
    \begin{align}
        \ket{\tilde{S}}\equiv \frac{1}{\sum_{N=0}^{L}p_N}\sum_{N=0}^{L}p_N\ket{S_N}+O(\epsilon), \quad  p_N=\prod_{i=1}^{N}\left(\frac{A_L-A_{i-1}}{B_i}\right).
        \label{eq:final}
    \end{align}
    In other words we have successfully proven Theorem \ref{thm:1}.
\end{proof}

\section{Phase diagram of the open ASEP(-LK)}
\label{sec:4}

It is interesting to apply our formula to derive the phase diagram of the open ASEP with/without Langmuir kinetics in terms of perturbation theory.
This can be done by considering the open boundary condition as a particular case of the inhomogeneous ad/desorption.
More conretely, the open ASEP-LK is defined by the following Markov matrix
\begin{align}
    \begin{split}
        &M\equiv M_0+\tilde{H},\quad
        \tilde{H}=\sum_{i=1}^{L}\tilde{h}_i,\quad  \tilde{h}_i \equiv 
        \begin{pmatrix}
            -\omega^{[a]}_i & \omega^{[d]}_i  \\
            \omega^{[a]}_i & -\omega^{[d]}_i  \\
        \end{pmatrix}_{V_i},
    \end{split}
\end{align}
where $M_0$ is the Markov matrix of the closed ASEP, while we demand  $\omega^{[a]}_{2}=\omega^{[a]}_{3}=\cdots = \omega^{[a]}_{L-1}$ and $\omega^{[d]}_{2}=\omega^{[d]}_{3}=\cdots = \omega^{[d]}_{L-1}$.
Note that the system becomes the open ASEP without Langmuir kinetics when $\omega^{[a]}_{i}=\omega^{[d]}_{i}=0$ for $i=2,\dots, L-1$.
When $\omega^{[a]}_{i}$ and $\omega^{[d]}_{i}$ are small, the system is amenable to perturbation theory and our formula \eqref{eq:final} is applicable. 
We therefore set 
\begin{align}
    \begin{split}
        &\omega^{[a]}_{1}=\epsilon \alpha, \quad \omega^{[d]}_{L}=\epsilon \beta\\
        &\omega^{[d]}_{1}=\epsilon \gamma, \quad \omega^{[a]}_{L}=\epsilon \delta\\
        &\omega^{[a]}_{i}=\epsilon a, \quad \omega^{[d]}_{i}=\epsilon b\quad \text{for $i=2,\dots, L-1$}
    \end{split}
\end{align}
and compute the stationary state of the open ASEP-LK at leading order in $\epsilon\ll 1$.
In other words, we have, in the language of \eqref{eq:def},
\begin{align}
    \begin{split}
        &\alpha_{1}=\alpha, \quad \alpha_{2}=\cdots=\alpha_{L-1}=a, \quad \alpha_{L}=\delta \\
        &\beta_{1}=\gamma, \quad \beta_{2}=\cdots=\beta_{L-1}=b, \quad \beta_{L}=\beta.
    \end{split}
\end{align}

For later convenience, we will denote the $N$-particle stationary state of the closed ASEP as
\begin{align}
    &\ket{S_N}\equiv \sum_{(n)_N}q_L[(n)_N]\ket{(n)_N},\quad q_L[(n)_N]=\qbinom{L}{N}^{-1}q^{\sum_{j=1}^{N}(L-j+1-n_{j})}
\end{align}
emphasising that the number of sites is $L$.
We will also denote $q_L[(n)_N|\tau_1=0,1,\tau_L=0,1]$ to restrict $(n)_N$ to obey particles at site $1$ and $L$ being present/absent.
Equivalently, we can set $q_L[(n)_N|\tau_1=0,1,\tau_L=0,1]=0$ if $(n)_N$ is not consistent with $\tau_1=0,1$ or $\tau_L=0,1$.

Let us now compute $A_i$ and $B_i$.
We hereafter restrict our attention to $A_i$ only since $B_i$ can be obtained from $A_i$ by swapping $\alpha$ with $\gamma$, $\delta$ with $\beta$, and $a$ with $b$.
We have
\begin{align}
    A_N=\sum_{\tau_1,\tau_L=0,1}A_N^{\tau_1,\tau_L},
\end{align}
where (for example) $A_N^{0,1}$ means that the sum over $\left(n\right)_N$ in the definition of $A_N$ is restricted to its subset in which $\tau_1=0$ (absent) and $\tau_L=1$ (present).
More concretely, they are defined as
\begin{align}
    A_N^{\tau_1,\tau_L}\equiv
    A_N\equiv \sum_{\left(n\right)_{N}}q_L[(n)_{N}|\tau_1,\tau_L]\sum_{n\in \left(n\right)_{N}}\alpha_n.
    \label{eq:partial}
\end{align}
This will not become too complicated as $q_L[(n)_N|\tau_1,\tau_L]$ can be related to $q_{L-2}[(n)_{N}]$, $q_{L-2}[(n)_{N-1}]$, etc.
For example, 
\begin{align}
    q_L[(n)_N|\tau_1=0,\tau_L=0]\times \qbinom{L}{N}=q^{2N-N}q_{L-2}[(n)_{N}]\times \qbinom{L-2}{N},
\end{align}
where $2N$ and $-N$ in the exponent comes from the shifting of $L$ to $L-2$ and of $n_j$ to $n_j-1$, respectively.
Similar arguments lead to
\begin{align}
    \begin{split}
        &q_L[(n)_N|\tau_1=0,\tau_L=0]=\frac{\qbinom{L-2}{N}}{\qbinom{L}{N}}q^{N}\times q_{L-2}[(n)_{N}],\\
        &q_L[(n)_N|\tau_1=0,\tau_L=1]=\frac{\qbinom{L-2}{N-1}}{\qbinom{L}{N}}q^0\times q_{L-2}[(n)_{N-1}],\\
        &q_L[(n)_N|\tau_1=1,\tau_L=0]=\frac{\qbinom{L-2}{N-1}}{\qbinom{L}{N}}q^{L-1}\times q_{L-2}[(n)_{N-1}],\\
        &q_L[(n)_N|\tau_1=1,\tau_L=1]=\frac{\qbinom{L-2}{N-2}}{\qbinom{L}{N}}q^{L-N}\times q_{L-2}[(n)_{N-2}].       
    \end{split}
\end{align}
By summing over $(n)_N$ in \eqref{eq:partial}, we get, {by noting that $\sum_{(n)_N}q_{L_0}[(n)_{N_0}]=\qbinom{L_0}{N_0}$},
\begin{align}
    \begin{split}
        & A_N=A_N^{0,0}+A_N^{0,1}+A_N^{1,0}+A_N^{1,1}\\
        &A_N^{0,0}=\frac{\qbinom{L-2}{N}}{\qbinom{L}{N}} q^N\times aN, \quad 
        A_N^{0,1}=\frac{\qbinom{L-2}{N-1}}{\qbinom{L}{N}} q^0\times (a(N-1)+\delta)\\
        &A_N^{1,0}=\frac{\qbinom{L-2}{N-1}}{\qbinom{L}{N}}q^{L-1}\times (\alpha+a(N-1)), \quad
        A_N^{1,1}=\frac{\qbinom{L-2}{N-2}}{\qbinom{L}{N}}q^{L-N}\times (\alpha+a(N-2)+\delta),     
    \end{split}
\end{align}
and likewise
\begin{align}
    \begin{split}
        & B_N=B_N^{0,0}+B_N^{0,1}+B_N^{1,0}+B_N^{1,1}\\
        &B_N^{0,0}=\frac{\qbinom{L-2}{N}}{\qbinom{L}{N}} q^N\times bN, \quad B_N^{0,1}=\frac{\qbinom{L-2}{N-1}}{\qbinom{L}{N}} q^0\times (b(N-1)+\beta)\\
        &B_N^{1,0}=\frac{\qbinom{L-2}{N-1}}{\qbinom{L}{N}}q^{L-1}\times (\gamma+b(N-1)),\quad B_N^{1,1}=\frac{\qbinom{L-2}{N-2}}{\qbinom{L}{N}}q^{L-N}\times (\gamma+b(N-2)+\beta).        
    \end{split}
\end{align}

\subsection{Phase diagram of the open ASEP}

We are now ready to compute the stationary state of the open ASEP by setting $a=b=\gamma=\delta=0$.
From the above computations, we have
\begin{align}
	A_i=\alpha q^{L-N}\times \frac{1-q^N}{1-q^L},\quad B_i=\beta \times \frac{1-q^N}{1-q^L},
\end{align}
which leads to
\begin{align}
	p_N\equiv\prod_{i=1}^N \left(\frac{A_L-A_{i-1}}{B_i}\right)=\left(\frac{\alpha}{\beta}\right)^N\times\qbinom{L}{N}.
\end{align}
Therefore the stationary state $\ket{\tilde{S}}$ of the open ASEP becomes, at leading order in $O(\epsilon)$,
\begin{align}
	\ket{\tilde{S}}&\equiv \frac{1}{\sum_{N=0}^L(\alpha/\beta)^N\times\qbinom{L}{N}}
	\sum_{N=0}^{L}\left(\frac{\alpha}{\beta}\right)^N\times\qbinom{L}{N}\ket{S_N}+O(\epsilon)\\
	&=\frac{1}{\sum_{N=0}^L(\alpha/\beta)^N\times\qbinom{L}{N}}\sum_{N=0}^{L}\left(\frac{\alpha}{\beta}\right)^N\sum_{(n)_N}q^{\sum_{j=1}^{N}(L-j+1-n_{j})}\ket{(n)_N}+O(\epsilon),
\end{align}
from which all relevant physical quantities (particle number density, $n$-point functions, etc.) can be computed.
Incidentally, the normalisation constant can be written more compactly as 
\begin{align}
	\sum_{N=0}^L\left(\frac{\alpha}{\beta}\right)^N\times\qbinom{L}{N}={}_2\phi_0\left[
	\begin{matrix}
		q^{-N}, 0\\
		\emptyset
	\end{matrix};q,\frac{\alpha}{\beta}\times q^{N}
	\right],
\end{align}
where ${}_r\phi_s$ is the $q$-hypergeometric function, defined as
\begin{align}
	{}_r\phi_s 
	\left[ 
	\begin{matrix} 
		a_1, a_2, \dots, a_r \\ b_1, b_2, \dots, b_s 
	\end{matrix};
	 q, z \right] 
	 = \sum_{n=0}^{\infty} \frac{(a_1, a_2, \dots, a_r;q)_n}{(b_1, b_2, \dots, b_s, q;q)_n}
	 \left((-1)^n q^{\binom{n}{2}}\right)^{s+1-r} z^n, 
\end{align}
in which $(a_1,a_2,\dots,a_r;q)_n\equiv \prod_{i=1}^{r}(a_i;q)_n$.

Let us now detect the phase transition in the open ASEP by computing the particle number density, or equivalently, the one point function $\braket{\tau_i}$.
For the sake of simpler analytic computations, we hereafter restrict our attention to $q=0$.
This makes thing particularly easy because the particle number density $\braket{\tau_i}_N$ of $\ket{S_N}$ is given by the step function,
\begin{align}
    \braket{\tau_i}_N=
    \begin{cases}
        1 & i\geq L-N+1\\
        0 & i\leq L-N
    \end{cases}.
\end{align} 
The number density $\braket{\tau_i}$ of $\ket{\tilde{S}}$ is then given by (at leading order in $\epsilon$)
\begin{align}
    \braket{\tau_i}=\frac{\sum_{N=L-i+1}^{L}(\alpha/\beta)^N}{\sum_{N=0}^{L}(\alpha/\beta)^N}=\frac{(\alpha/\beta)^{L+1-i}-(\alpha/\beta)^{L+1}}{1-(\alpha/\beta)^{L+1}},
\end{align}
where we used $\lim_{q\to 0}\qbinom{L}{N}=1$.
We plot $\braket{\tau_i}$ for some values of $\alpha/\beta$ in Figure \ref{fig:open-asep}.

\begin{figure}[htbp]
    \begin{center}
        \begin{overpic}[width=0.7\columnwidth]{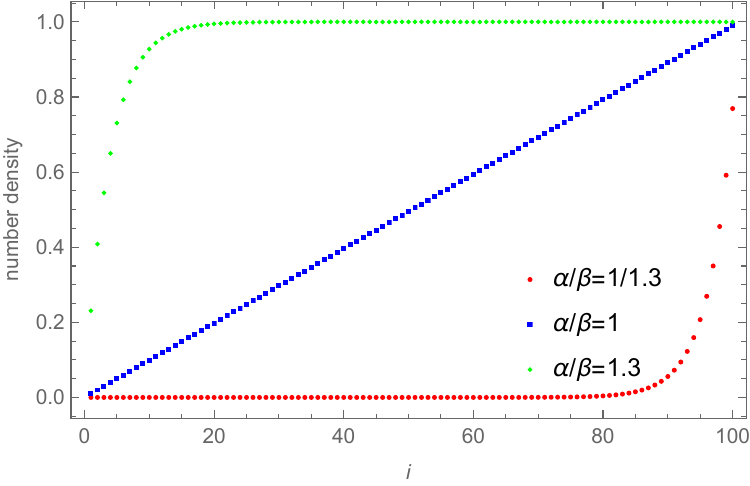}
        \end{overpic}
    \end{center}
    \caption{Plot of the particle number densities of the open ASEP (at $q=0$) as functions of lattice sites $i$. We take the number of lattice sites to be $L=100$. For $L$ as large as $100$, we already see three distinct phases -- $\alpha/\beta<1$ corresponds to the low-density phase, $\alpha/\beta=1$, the coexistence phase, and $\alpha/\beta>1$, the high-density phase.}
    \label{fig:open-asep}
\end{figure}

Let us take the thermodynamic limit $L\to \infty$.
It is immediate to see that the behaviour of $\braket{\tau_{i}}$ are completely different for three cases, $\alpha/\beta\lesseqqgtr 1$.
For $\alpha/\beta<1$, we have
\begin{align}
    \braket{\tau_{i}}=
    \begin{cases}
        0 & \text{for $L-i\gg L^0$}\\
        \displaystyle \left(\frac{\alpha}{\beta}\right)^{L+1-i}
        & \text{for $L-i= O(L^0)$}
    \end{cases},
\end{align}
for $\alpha/\beta=1$,
\begin{align}
    \braket{\tau_{i}}=
    \frac{i}{L+1},
\end{align}
and for $\alpha/\beta>1$,
\begin{align}
    \braket{\tau_{i}}=
    \begin{cases}
        1-\displaystyle \left(\frac{\alpha}{\beta}\right)^{-i} & \text{for $i=O(L^0)$}\\
        1
        & \text{for $i\gg O(L^0)$}
    \end{cases}.
\end{align}
Corresponding to the number density in the bulk region of the open chain, we call the phase realised for $\alpha/\beta<1$ as the low-density phase, $\alpha/\beta=1$ as the coexistence phase, and $\alpha/\beta>1$ as the high-density phase.
This is consistent with the known results obtained using exact methods in \cite{Derrida_1993}.
We depict our perturbative phase diagram of the open ASEP in Figure \ref{fig:open-asep-phase}.

\begin{figure}[htb]
\centering
\resizebox{0.5\columnwidth}{!}{%
\begin{circuitikz}
\draw [](6.25,14.75) to[short] (6.25,8);
\draw [](6.25,8) to[short] (14,8);
\node [] at (6.25,15.00) {$\beta$};
\node [] at (14.25,8.00) {$\alpha$};
\draw [short] (6.25,8) -- (13.5,15.25);
\node [] at (11.5,10.25) {High-density phase};
\node [] at (8.25,12.25) {Low-density phase};
\draw [->, >=Stealth, dashed] (10.75,14) -- (11.5,13.5);
\node [] at (10.25,14.5) {Coexistence phase};
\node [] at (13.55,14.5) {$\alpha=\beta$};
\end{circuitikz}
}%
\caption{The phase diagram of the open ASEP. The horizontal axis represents the adsorption rate at site $i=1$, while the vertical, the desorption rate at site $i=L$. This recovers the perturbative part of the known phase diagram of the open ASEP, obtained exactly in \cite{Derrida_1993}.}
\label{fig:open-asep-phase}
\end{figure}
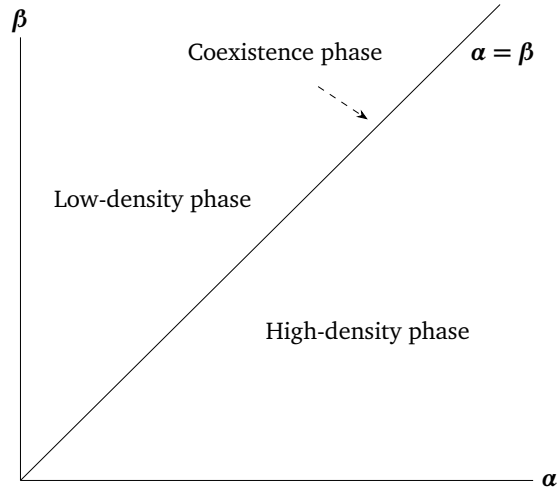

\subsection{Phase diagram of the open ASEP-LK}

We can also compute the stationary state of the open ASEP-LK by turning on $a$ and $b$.
Just as in the case of the open ASEP, we have
\begin{align}
    A_i=\frac{a(i-1)+aq^i+(\alpha -a) q^{L-i}+(a-\alpha-ai)q^L}{1-q^L}, \\
    B_i=\frac{b(i-1)+\beta+(b-\beta)q^i-b q^{L-i}+(b-bi)q^L}{1-q^L},
\end{align}
from which we can compute the stationary state of the open ASEP-LK at leading order in $\epsilon$.
In particular at $q=0$, $p_N$ can be expressed concisely as
\begin{align}
    p_N=\left(-\frac{a}{b}\right)^n\frac{\left(-L-\frac{\alpha }{a}+1\right)_n}{\left(\frac{\beta }{b}\right)_n},
\end{align}
where $(x)_n\equiv \prod_{i=0}^{n-1} (x+i)$ is the Pochhammer symbol.
One can then compute $\braket{\tau}_i=\sum_{L+1-i}^{L} p_N/\sum_{0}^{L} p_N$ and express it using hypergeometric functions, but we will not discuss this further as it will just be unnecessarily complex.
We plot $\braket{\tau}_i$ for some parameters in Figure \ref{fig:open-asep-lk}.

\begin{figure}[htb]
    \begin{center}
        \begin{overpic}[width=0.7\columnwidth]{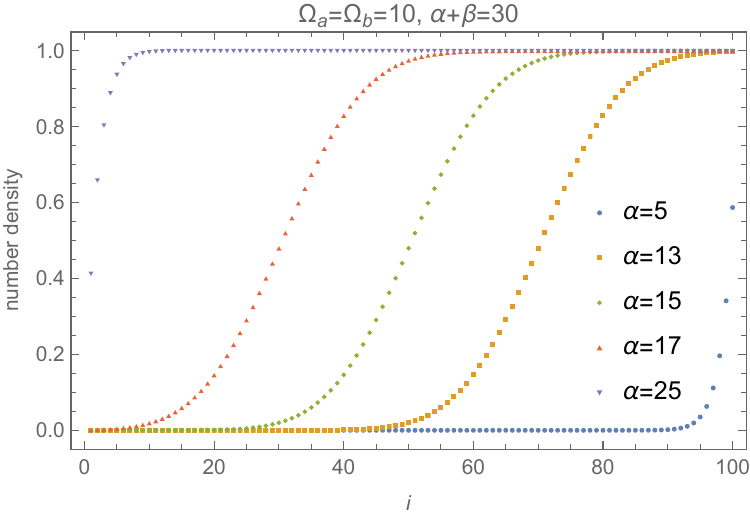}
        \end{overpic}
    \end{center}
    \caption{Plot of the particle number densities of open ASEP-LK (at $q=0$) as functions of lattice sites $i$. We take the number of lattice sites to be $L=100$. 
    We set $\Omega_{a}\equiv aL=10$, $\Omega_{b}\equiv bL=10$ and varied $\alpha,\,\beta$ while fixing $\alpha+\beta=30$.
    We sampled $\alpha=5,\,13,\,15,\,17,\,25$ in the plot.
    We see that $\alpha=5$ is in the low-density phase, $\alpha= 13,\,15,\,17$, the domain-wall phase, and $\alpha=25$, the high-density phase, consistent with analytic computations.
    }
    \label{fig:open-asep-lk}
\end{figure}

We now take the thermodynamic limit, $L\to\infty$.
For the sake of manageability we will only consider the bulk region of the open chain, so that we take $i\to\infty$ at the same time while fixing $x\equiv i/L$.
We will also take $a,\,b\to 0$ while fixing $\Omega_a\equiv aL$ and $\Omega_b\equiv bL$ -- otherwise the collective effect of the bulk ad/desorption will dominate the physics and there will be no interesting phase transitions.

Let us compute $\braket{\tau}_i=\sum_{N=L+1-i}^{L} p_N/\sum_{N=0}^{L} p_N$.
At large $L$ and at fixed $x$, $\Omega_a$, $\Omega_b$, it simply becomes
\begin{align}
    \rho(x)\equiv \braket{\tau}_i=
    \begin{cases}
        1 & \text{when $p_{L-i+1}/p_{L-i}>1$}\\
        0 & \text{when $p_{L-i+1}/p_{L-i}<1$}
    \end{cases},
\end{align}
where we have 
\begin{align}
    \frac{p_{L-i+1}}{p_{L-i}}=\frac{\Omega_a x+\alpha}{\Omega_b(1-x)+\beta}+O(L^{-1}),
\end{align}
for general $0<q<1$.
This means that the domain-wall that separates the low- and the high-density phase happens at $x_d$ (the former appears for $x<x_d$ and the latter, $x>x_d$), given by
\begin{align}
    x_d=\frac{\Omega_b-\alpha+\beta}{\Omega_a+\Omega_b}.
\end{align}
We call this the domain-wall phase (called the shock phase in \cite{PhysRevE.68.026117}).\footnote{The position of the domain wall $x_d$ is indeed consistent with numerics, see Figure \ref{fig:open-asep-lk}. We expect the position to lie at $i=70,\, 50,\, 30$ for $\alpha=13,\, 15,\, 17$, respectively.}
Additionally, when $x_d>1$, the system is in the low-density phase, whereas when $x_d<0$, it is in the high-density phase.
Summarising this, we have the low-density phase when $\beta>\alpha+\Omega_a$, the domain-wall phase when $\alpha-\Omega_b<\beta<\alpha+\Omega_a$, and the high-density phase when $\beta<\alpha-\Omega_b$.
This is consistent with the results obtained using the (theoretically unjustified but numerically confirmed) mean-field approximation in \cite{PhysRevE.68.026117}.
We depict our perturbative phase diagram of the open ASEP-LK in Figure \ref{fig:open-asep-lk-phase}.

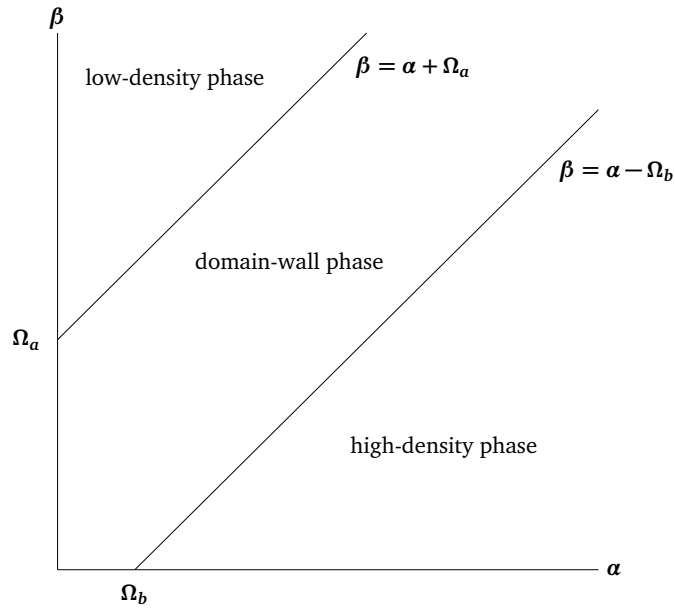
\begin{figure}[htbp]
\centering
\resizebox{0.6\columnwidth}{!}{%
\begin{circuitikz}
\draw [](6.25,15.25) to[short] (6.25,6.5);
\draw [](6.25,6.5) to[short] (15,6.5);
\draw [short] (6.25,10.25) -- (11.25,15.25);
\draw [short] (7.5,6.5) -- (15,14);
\node [] at (7.5,6.1) {$\Omega_b$};
\node [] at (5.75,10.25) {$\Omega_a$};
\node [] at (6.25,15.5) {$\beta$};
\node [] at (15.25,6.5) {$\alpha$};
\node [] at (8.15,14.5) {low-density phase};
\node [] at (10,11.5) {domain-wall phase};
\node [] at (12.5,8.5) {high-density phase};
\node [] at (12.0,14.7) {$\beta=\alpha+\Omega_a$};
\node [] at (15.3,13) {$\beta=\alpha-\Omega_b$};
\end{circuitikz}
}%
\caption{The phase diagram of the open ASEP-LK. The horizontal axis represents the adsorption rate at site $i=1$, while the vertical, the desorption rate at site $i=L$.
This recovers the perturbative part of the known phase diagram of the open ASEP-LK, obtained using the mean-field approximation in \cite{PhysRevE.68.026117}.}
\label{fig:open-asep-lk-phase}
\end{figure}

\section{Discussions and outlook}
\label{sec:5}

In this paper, we studied the effect of perturbation on generic closed exclusion processes.
We first derived the formula that expresses the stationary state of closed processes (infinitesimally) perturbed by ad/desorption in terms of that of the unperturbed system.
The rates of ad/desorption did not have to be homogeneous in sites, so as a consequence we proved the formula in \cite{PhysRevE.97.032135} while generalising it.
We pointed out that our formula is a result of the simple degenerate perturbation theory on non-normal matrices.

As an application of the formula, we drew the perturbative part of the phase diagram of the open ASEP(-LK), which agreed with known results.
For the open ASEP we recognised three distinct phases, called the low-density, the coexistence, and the high-density phases.
For the open ASEP-LK, on the other hand, we recognised the low-density and the high-density phases, as well as the domain-wall phase in which the system contains a domain wall separating the low- and the high-density regions.
It is important that these results were obtained without using any theoretically unjustified approximations -- we exactly know when and how much our approximation breaks down.

There are a number of interesting future directions.
First of all, it would be interesting to continue the perturbation theory to higher orders in $\epsilon$.
For example, if we compare the phase diagram of \cite{PhysRevE.68.026117} with ours, we notice that the phase boundaries are not exactly straight, \it i.e., \rm $\beta$ at the critical value is not a linear function of $\alpha$.
It would be beneficial to compute the form of the phase boundaries at higher orders in perturbation theory to explain this.

Secondly, it would be interesting to apply our method to other systems of interest.
For example, it would be interesting to apply it to the multi-lane ASEP \cite{Hilhorst_2012} or to the ASEP(-LK) with inhomogeneous hopping rates \cite{PhysRevE.74.031906}.\footnote{Studying the latter in relation to sine-square deformation and other similar deformations \cite{0810.0622,1012.0472} might be also interesting.}
It would also be interesting to study the open ASEP-LK by starting the perturbation from the exactly known stationary state of the open ASEP.
Note that what we would need to do is in general non-degenerate perturbation theory.
The result would allow us draw wider region of the phase diagram upon taking the thermodynamic limit.
In particular, observing the three-phase coexistence predicted in \cite{PhysRevE.68.026117} would be very interesting. 

Studying the relaxation dynamics in perturbation theory is also interesting.
One could, for example, compute the low-lying spectra and the corresponding states for the same class of theories at leading order in perturbation theory.
In fact, \cite{PhysRevE.97.032135} conjectures such a formula for the closed ASEP-LK, so it would be interesting to start by proving it.


It would also be important to justify the hydrodynamic description theoretically.
This could be either justifying the mean-field approximation or continuing the idea developed in \cite{cond-mat/0302208}.
In terms of the former, one could for example compute the two-point functions perturbatively in $\epsilon$;
If they factorise in the thermodynamic limit, the mean-field approximation is justified at least perturbatively.
In this context, it might be worthwhile to rewrite the open ASEP-LK in the language of one-dimensional (non-Hermitian) spin chains.
The mean-field approximation can then be justified when the model flows to the free fixed-point in the infrared.
In terms of the latter, it would be interesting to come up with a model which is strongly-coupled in the infrared, where the mean-field approximation cannot be justified but the hydrodynamic description is available \cite{1801.08952,2104.14650,2208.02124,2405.19984}.
Incidentally, in terms of the field theoretic understanding of the exclusion processes, interpreting the asymmetric hopping parameter $q$ as an imaginary vector potential is also interesting \cite{1908.06107}.
Because the $q\to 0$ limit corresponds to the limit of large imaginary vector potential, one might be able to use effective field theory to study such regions \cite{Hellerman:2015nra,Monin:2016jmo,Alvarez-Gaume:2016vff,Hellerman:2017efx,Hellerman:2018sjf,Watanabe:2019adh,Gaume:2020bmp}.

Lastly, studying the relationship between the general solvable exclusion processes with other models with ${U}_q({sl}(2))$ symmetry would be interesting.
In particular, the SYK model (a quantum mechanical model with all-to-all random interactions of $N$ fermions) in the double-scaling limit \cite{Berkooz:2018qkz,Berkooz:2018jqr,Berkooz:2022fso} is known to possess such a symmetry and it would be interesting to connect them further.
It would also be interesting to interpret it in terms of Jackiw-Teitelboim gravity \cite{Jackiw:1984je,Teitelboim:1983ux}, which is believe to be dual to the SYK model in the context of the AdS/CFT correspondence \cite{Maldacena:2016upp}.



\section*{Acknowledgements}

The author is grateful to Yuki Ishiguro, Jun Sato, and Kei Tokita for fruitful discussions.
The author is supported by Grant-in-Aid for JSPS Fellows No.~22KJ1777 and by MEXT KAKENHI Grant No.~24H00957.

\appendix

\section{A ``physical'' proof of Theorem \ref{thm:1}}

We give an equivalent but more physical proof of Theorem \ref{thm:1}.
Note that the idea of this proof originally appears in \cite{2405.09261}, whose submission was coordinated with ours.

The idea of the proof is to interpret \eqref{eq:fromhere} as defining the stationary state of a new Markov process by using the fact that $\braket{L_N|\psi}$ is the sum of probabilities of realising $N$-particle states in $\ket{\psi}$.
Indeed, $\braket{L_N|H|S}=0$ means that the probability distribution labelled by the number of particles (forgetting the details about where individual particles are) in the system is unaltered by the action of $H$.
We will denote the state representing the sum of all $N$-particle states as $\Circled{N}$ and its realisation probability as $\tilde{p}_N$.
Now, the action of $H$ is such that it takes $\Circled{N}$ to $\Circled{N+1}$ with probability $A_L-A_N$ per unit time and $\Circled{N}$ to $\Circled{N-1}$ with $B_N$.
Then, for $\tilde{p}_0$ to be constant in time, we have $B_1\tilde{p}_1=({A_L-A_{0}})\tilde{p}_{0}$, for $\tilde{p}_1$, $B_2\tilde{p}_2=({A_L-A_{1}})\tilde{p}_{1}$, and so on, up to $B_L\tilde{p}_L=({A_L-A_{L-1}})\tilde{p}_{L-1}$.
Equivalently, we have
\begin{align}
    \tilde{p}_N=\frac{A_L-A_{N-1}}{B_N}\tilde{p}_{N-1}.
\end{align}
We already know from linear algebra that $\ket{S}$ needs to be spanned only by using $\ket{S_N}$, as in \eqref{eq:from}.
Therefore $\tilde{p}_N$ must be identified with $p_N/\sum_{N=1}^{L}{p_N}$.
This concludes another proof of Theorem \ref{thm:1}.








\bibliography{ref,main}

\end{document}